\documentclass[pdflatex,sn-mathphys-num]{sn-jnl}


\usepackage{graphicx}%
\usepackage{multirow}%
\usepackage{amsmath,amssymb,amsfonts}%
\usepackage{amsthm}
\usepackage{mathrsfs}%
\usepackage[title]{appendix}%
\usepackage{xcolor}%
\usepackage{textcomp}%
\usepackage{manyfoot}%
\usepackage{booktabs}%
\usepackage{algorithm}%
\usepackage{algorithmicx}%
\usepackage{algpseudocode}%
\usepackage{listings}%
\usepackage{subcaption}
\usepackage{placeins}
\usepackage{float}
\usepackage{graphicx}



\theoremstyle{plain} 
\newtheorem{theorem}{Theorem} 


\theoremstyle{definition} 
\newtheorem{definition}{Definition}

\theoremstyle{remark} 

\raggedbottom

\begin{document}

\title[Article Title]{Communications Performance Analysis of Wireless Multiple Access Channel with Specially Correlated Sources}


\author [1]{\fnm{Akram} \sur{Entezami}}\email{Akram\_Entezami1@baylor.edu}

\author[2]{\fnm{Ghoosheh} \sur{Abed Hodtani}}\email{ghodtani@gmail.com}

\affil[1]{\orgdiv{Electrical and Computer Engineering}, \orgname{Baylor University}, \orgaddress{\street{1311 S 5th St}, \city{Waco}, \postcode{76706}, \state{Texas}, \country{United States}}}

\affil*[2]{\orgdiv{Electrical Engineering}, \orgname{Ferdowsi University of Mashhad}, \orgaddress{\street{Azadi Square}, \city{Mashhad}, \postcode{9177948974}, \state{Khorasan Razavi}, \country{Iran}}}


\abstract{From both practical and theoretical viewpoints, performance analysis of communication systems using information-theoretic results is very important. In this study, first, we obtain a general achievable rate for a two-user wireless multiple access channel (MAC) with specially correlated sources as a more general version for continuous alphabet MACs, by extending the known discrete alphabet results to the wireless continuous alphabet version. Next, the impact of wireless channel coefficients correlation on the performance metrics using Copula theory, as the most convenient way for describing the dependence between several variables, is investigated. By applying the Farlie-Gumbel-Morgenstern (FGM) Copula function, we obtain closed-form expressions for the outage probability (OP) under positive/negative dependence conditions. It is shown that the fading correlation improves the OP for a negative dependence structure. Specifically, whenever the dependence structure tends to negative values, the OP decreases and the efficiency of the channel increases. Finally, the efficiency of the analytical results is illustrated numerically.}

\keywords{Copula theory, Correlated wireless fading coefficients, Multiple access channel with specially correlated sources, Outage probability}



\maketitle

\section{Introduction}\label{sec1}

Multiple access channel (MAC), first introduced by Shannon in  \cite{shannon_two-way_nodate}, has been studied widely, and the related results for the corresponding discrete memoryless (DM) channels have been extended to continuous alphabet Gaussian and wireless versions, as done for point-to-point channels. Costa \cite{costa_writing_1983} extended the Gel’fand-Pinsker theorem \cite{m_pinsker_coding_nodate} from discrete alphabet channels with transmitter side information to the Gaussian channel and proved that the capacity in the presence of the interference known at the transmitter is the same as the case with no interference. Slepian-Wolf established the capacity region of MAC with a common message for DM channels in \cite{slepian_coding_1973}. The capacity regions for GMAC with a common message were derived in \cite{prelov_asymptotic_1991}.

In wireless communication theory, the consideration of dependence structures associated with wireless channels is often neglected for the sake of simplicity \cite{biglieri_impact_2016}. This is particularly evident in the case of multiuser channels, where the physical proximity of transmitters leads to non-independent channel coefficients observed by each user. However, ignoring source correlation can lead to unrealistic models and inaccurate performance evaluations. By considering specially correlated sources, we can develop more accurate and realistic models that reflect the characteristics of real-world wireless communication systems.

For example, in multiple antennas point-to-point channels, the independence of coefficients is compromised due to physical limitations on antenna spacing. Similarly, in both single antenna and multi-antenna multi-user channels, the presence of a shared physical environment further leads to interdependence among the coefficients. Hence, it becomes crucial to thoroughly investigate the impact of the channel coefficients correlation on communication performance \cite{akuon_optimal_2016,liu_asymptotic_2010}.

The study of MAC with specially correlated sources aims to address several important problems and challenges. One primary problem is to analyze and understand the performance of wireless MAC with specially correlated sources. However, the examination of dependence structures in MAC with specially correlated sources is a challenging task that has frequently been overlooked in favor of simplicity, as mentioned earlier. Nonetheless, recognizing and addressing this aspect is vital to obtain a more accurate understanding of the channel behavior and to develop more effective approaches for maximizing performance.

One plausible approach to combine arbitrary dependence structures is the use of Copula theory \cite{roger_b_nelsen_introduction_2006}. Copulas are widely used in statistics, survival analysis, image processing, and machine learning. Recently, they have become popular in the context of wireless communication systems analysis, supported by empirical evidence \cite{peters_communications_2014}. Furthermore, the potential for designing and intelligently controlling dependence structures has served as a motivation to enhance system performance \cite{besser_copula-based_2021}. In their study \cite{noauthor_performance_nodate}, the authors explore a novel approach to enhance communication efficiency in wireless networks by utilizing Non-Orthogonal Multiple Access (NOMA) technique with amplify-and-forward (AF) relays. The proposed system operates over Nakagami-m fading channels, enabling simultaneous communication between the base station and multiple mobile users. Through comprehensive analysis, the authors investigate the system's outage behavior, providing both precise closed-form expressions and simplified bounds for the outage probability.

In \cite{tavsanoglu_wireless_2021}, the authors extensively explore the limitations and requirements related to a Hyperloop system's communication. They propose a comprehensive wireless communication system that combines two networks utilizing the capabilities of 802.11 and 5G technologies. Reference \cite{he_achieving_2020} conducts a study on secure transmission in multiple access wiretap channels where the legitimate users don't have knowledge of the instantaneous channel state information (CSI) of the wiretap channels. The authors propose a two-phase jamming-assisted secure transmission scheme that utilizes the randomness and independence of wireless channels.

The advent of Fifth generation (5G) and beyond 5G (B5G) networks necessitates connectivity for a large number of devices with high speed and low latency. Non-orthogonal multiple access (NOMA) has gained prominence in recent research as it is considered an enabling technology for 5G and B5G networks. NOMA enables multiple users to share the same time and frequency resources, offering extensive connectivity and high spectral efficiency. Reference \cite{magableh_performance_2022} examines the performance of NOMA systems assuming independent but not necessarily identical N-Nakagami-m multipath fading channels.

\section{Related Works}\label{sec2}
In a wireless point-to-point fading channel, characterized by a single transmitter and receiver, both the channel coefficient and the resulting signal-to-noise ratio (SNR) are subject to randomness. The probability distributions of these variables have been extensively investigated in the existing literature \cite{tse_fundamentals_2005,atapattu_mixture_2011}. Extensive research has been conducted to understand the statistical properties and probability distributions of channel coefficients in fading point-to-point channels with multiple antennas (MIMO) and multi-user channels like multiple access channels (MAC). These coefficients, which represent the fading characteristics of individual channel paths, introduce complexity to the system and significantly impact its overall performance. Reference \cite{s_xu_performance_nodate} explores the performance analysis of correlated multiuser multiple-input multiple-output (MIMO) systems using Copula theory. It investigates the impact of correlation among channel coefficients on system performance metrics such as capacity, outage probability, and error rates. Reference \cite{jia_reliability_2014} focuses on the outage performance analysis of correlated multiuser MIMO systems.

The study described in \cite{moeen_taghavi_impact_2016} explores the impact of relay nodes on the coverage region of MACs. The authors specifically investigate the role of relays, which act as intermediate nodes to improve signal transmission between transmitters and receivers. In their work \cite{ghadi_copula-based_2021}, the authors delve into examining the influence of fading correlation on the performance of the doubly dirty fading multiple access channel (MAC) when non-causally known side information is available at transmitters. Evaluating the coverage region is a crucial aspect when analyzing the performance of wireless communication systems.

In \cite{ghadi_role_2021}, the effect of (perfect and nonperfect) SI on the maximum coverage region of GMARC for a fixed rate of the transmitters and a fixed distance between relay and destination is analyzed, using the innovative Copula theory approach. Additionally, Reference \cite{etminan_effects_2021} presents closed-form expressions derived by the authors for average secrecy capacity (ASC), secrecy outage probability (SOP), and secrecy coverage region (SCR). Reference \cite{men_performance_2017} considers a NOMA-based relaying network over Nakagami-m fading channels, and \cite{zhang_energy-efficient_2017} studies the benefit of NOMA in enhancing energy efficiency (EE) for a multiuser downlink transmission.

\subsection{Our Contributions}\label{subsec2}
In a MAC, the inputs might be independent, specially correlated, arbitrarily correlated sources. In some applied MACs, such as wireless sensor networks, the sensed parameters, or in vehicular communications and in many uplink transmissions, the transmitted signals, are random variables, and hence, here, we study the wireless MAC with specially correlated sources. This approach is discussed in detail in \cite{bahmani_capacity_2013}. By considering the impact of channel coefficient correlation, our analysis aims to provide a deeper understanding of communication performance in wireless MAC with specially correlated sources. By quantifying the effects of dependence on system-level metrics such as OP, we can derive valuable insights to inform the design and optimization of wireless MAC protocols, resource allocation strategies, and interference management techniques.

We consider a two transmitter MAC with specially correlated sources and extend the known Slepian-Wolf (SW) discrete alphabet capacity region to the wireless version with correlated channel coefficients. Additionally, we examine the impact of correlation between wireless Rayleigh coefficients on the outage probability (OP), a critical performance metric in wireless communication. To analyze this, we employ the FGM copula function, which enables us to investigate the relationship between correlation and OP. Based on our theoretical findings, we demonstrate that when the correlated coefficients exhibit a negative structure, the OP is significantly improved compared to the scenario where the coefficients are independent.
In Section 2, we provide a comprehensive review of the MAC and Copula theory. Section 3 outlines our proposed approach and presents a detailed explanation of our work. To validate our findings, Section 4 presents numerical simulations and their results. Finally, in Section 5, we summarize our conclusions and contributions in the paper's conclusion.

\section{General Wireless Two-User MAC: An Introduction and Copula Theory Overview} \label{sec4}
In this section, we define general wireless two user MAC and review briefly the copula theory.

\subsection{The Wireless General MAC}\label{subsec4}
In the two-user wireless MAC (as illustrated in Figure~\ref{fig:Multiple Access Channel with Rayleigh Fading Coefficients}), the transmitters send the inputs $X_1$ and $X_2$, which can take various forms: independent, specially correlated, or arbitrarily correlated random variables. The received signal at the receiver, denoted as $Y$, can be defined as follows:

\begin{equation}
Y= h_1 X_1+h_2 X_2+Z
\label{eq:received_signal}
\end{equation}

Here $Z$ represents the Additive White Gaussian Noise (AWGN) with zero mean and variance $N$ which is denoted as $Z \sim \mathcal{N}(0,N)$ at the receiver. The channel coefficients $h_1$ and $h_2$ are random variables. In a specific scenario with Rayleigh fading, the squared magnitudes of $h_1$ and $h_2$ are represented as $g_1 = |h_1|^2$ and $g_2 = |h_2|^2$, respectively. These squared magnitudes follow an exponential distribution with marginal probability density functions $f(g_1) = \frac{1}{2\sigma_1^2} e^{-\frac{g_1}{2\sigma_1^2}}$ and $f(g_2) = \frac{1}{2\sigma_2^2} e^{-\frac{g_2}{2\sigma_2^2}}$, and the corresponding cumulative distribution functions are $F(g_1) = 1 - e^{-\frac{g_1}{2\sigma_1^2}}$ and $F(g_2) = 1 - e^{-\frac{g_2}{2\sigma_2^2}}$.

\begin{figure}[h!]
  \centering
    \includegraphics[width= \linewidth]{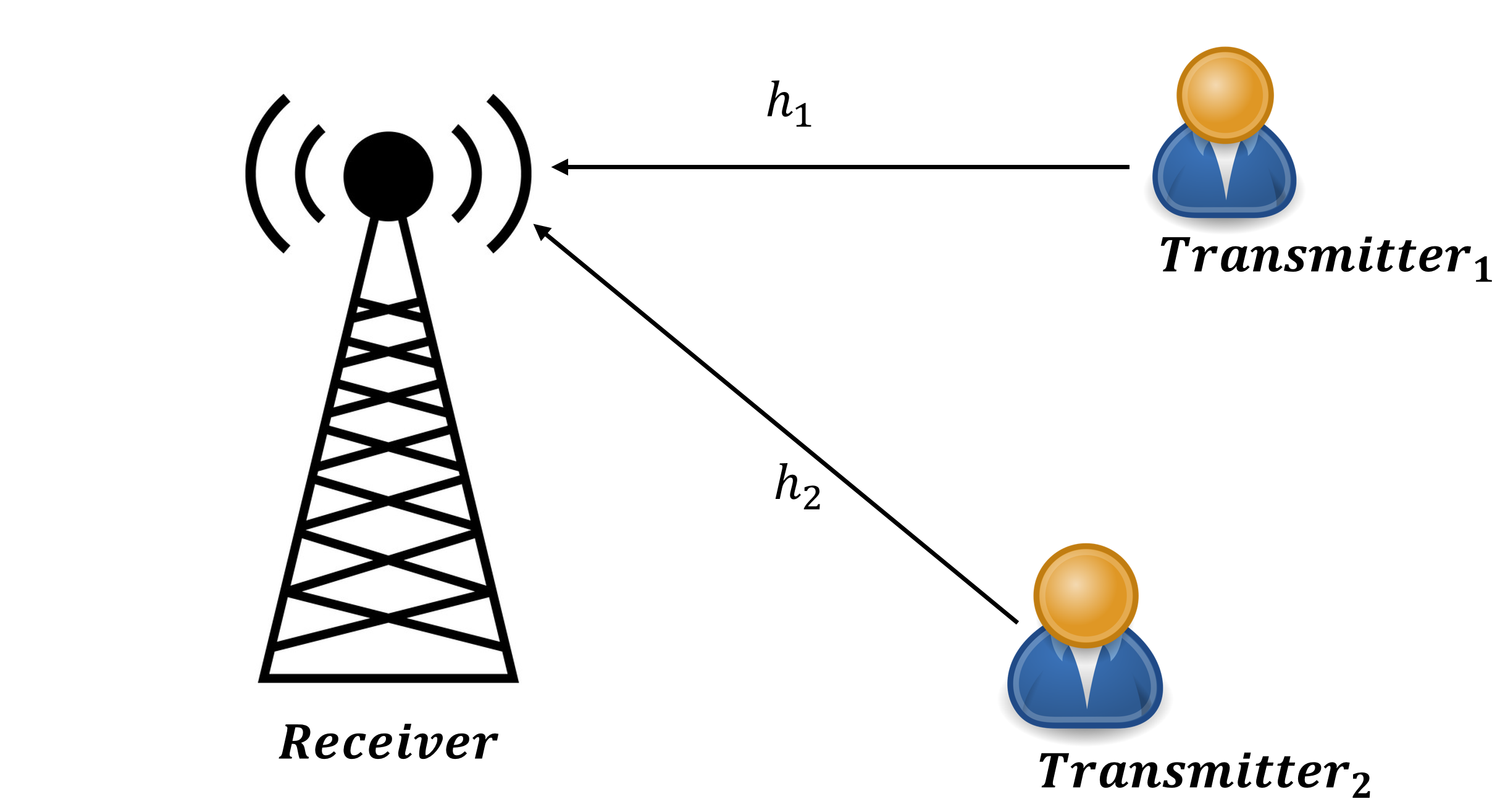}
    \caption{Multiple Access Channel with Rayleigh Fading Coefficients.}
    \label{fig:Multiple Access Channel with Rayleigh Fading Coefficients}
  \hfill
\end{figure}

\subsection{Copula Theory}\label{sub5}
In this part, we review briefly some basic definitions and properties of two-dimensional copulas.

\begin{definition}[Copula Function]\label{def1}
Let $\mathbf{V} = (V_1, V_2)$ be a vector of two random variables (RVs) with marginal cumulative distribution functions (CDFs) $F(v_j) = \Pr(V_j \leq v_j)$ for $j = 1, 2$. The relevant bivariate CDF is defined as:

\begin{equation}
F(v_1, v_2) = \Pr(V_1 \leq v_1, V_2 \leq v_2)
\end{equation}

Then, the Copula function $C(u_1, u_2)$ of $\mathbf{V} = (V_1, V_2)$ is defined on the unit hypercube $[0, 1]^2$ with uniformly distributed RVs $U_j := F(v_j)$ for $j = 1, 2$ over $[0, 1]$ as follows:

\begin{equation}
C(u_1, u_2) = \Pr(U_1 \leq u_1, U_2 \leq u_2)
\end{equation}
\end{definition}

In this section, we explore a fundamental and important theory in copula theory known as Sklar's theorem. Sklar's theorem offers a deep understanding of the role that copulas play in establishing the relationship between multivariate distribution functions and their respective univariate marginals.

\begin{theorem}[Sklar's theorem]\label{thm1}
Let $F_{V_1, V_2}(v_1, v_2)$ be a joint cumulative distribution function of random variables $V_1, V_2$ with respective marginals $F_{V_1}(v_1)$ and $F_{V_2}(v_2)$. Then there exists a copula function $C$ that satisfies the following equation:

\begin{equation}
F_{V_1, V_2}(v_1, v_2) = C(F_{V_1}(v_1), F_{V_2}(v_2))
\end{equation}
\end{theorem}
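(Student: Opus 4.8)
The plan is to construct the copula explicitly through the probability integral transform and then verify that the constructed object both qualifies as a copula (in the sense of Definition~\ref{def1}) and reproduces the joint distribution. I would begin with the clean case in which the marginals $F_{V_1}$ and $F_{V_2}$ are continuous, since there the argument is transparent and the copula is in fact unique. Define the transformed variables $U_j := F_{V_j}(V_j)$ for $j = 1, 2$. The first step is the classical observation that, for a continuous CDF $F$, the random variable $F(V)$ is uniform on $[0,1]$: this follows because $\Pr(F(V) \le u) = \Pr(V \le F^{-1}(u)) = F(F^{-1}(u)) = u$ for $u \in [0,1]$, where $F^{-1}$ denotes the quantile function. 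Hence each $U_j$ has uniform margins, exactly as required.

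The second step defines the candidate copula as the joint CDF of the transformed pair,
\begin{equation}
C(u_1, u_2) := \Pr(U_1 \le u_1, U_2 \le u_2) = F_{V_1, V_2}\bigl(F_{V_1}^{-1}(u_1), F_{V_2}^{-1}(u_2)\bigr).
\end{equation}
I would then verify that $C$ satisfies the defining properties of a copula: it is grounded, so that $C(u_1,0)=C(0,u_2)=0$; it has uniform margins, so that $C(u_1,1)=u_1$ and $C(1,u_2)=u_2$; and it is $2$-increasing, i.e. the $C$-measure assigned to every rectangle in $[0,1]^2$ is nonnegative. Each of these follows directly from the analogous monotonicity and limiting properties of the bivariate CDF $F_{V_1,V_2}$, inherited through the monotone maps $F_{V_j}^{-1}$. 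Finally, substituting $u_j = F_{V_j}(v_j)$ and using $F_{V_j}^{-1}(F_{V_j}(v_j)) = v_j$ (valid for continuous $F_{V_j}$) yields the asserted identity $F_{V_1, V_2}(v_1, v_2) = C(F_{V_1}(v_1), F_{V_2}(v_2))$.

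The hard part will be the general case in which the marginals may have jumps or flat stretches, so that $F_{V_j}^{-1}(F_{V_j}(v_j)) = v_j$ can fail and the range of $F_{V_j}$ is a proper subset of $[0,1]$. Here the identity above only pins $C$ down on the closure of $\mathrm{Ran}(F_{V_1}) \times \mathrm{Ran}(F_{V_2})$, and one must extend it to all of $[0,1]^2$ while preserving the copula properties. The cleanest route is the distributional transform: introduce an independent $W \sim \mathrm{Uniform}[0,1]$ and set $U_j := F_{V_j}(V_j^{-}) + W\bigl(F_{V_j}(V_j) - F_{V_j}(V_j^{-})\bigr)$, which is exactly uniform even when $F_{V_j}$ has atoms; the joint CDF of $(U_1,U_2)$ then serves as the required copula. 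An alternative is to extend $C$ from the closed range by bilinear interpolation on the complementary rectangles and check that the rectangle inequality is preserved. I expect that verification of $2$-increasingness after extension to be the only genuinely delicate estimate; the remaining continuity and boundary checks are routine. Since the statement claims only existence, uniqueness (which holds on $\mathrm{Ran}(F_{V_1}) \times \mathrm{Ran}(F_{V_2})$) need not be established, which simplifies the general case considerably.
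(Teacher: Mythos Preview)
Your proposal is a competent sketch of a standard proof of Sklar's theorem: probability integral transform for continuous marginals, distributional transform (or bilinear extension) for the general case. The outline is correct and the identification of the $2$-increasingness check after extension as the delicate point is accurate.

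However, there is nothing to compare it against: the paper states Sklar's theorem without proof. It is cited as background from Nelsen's monograph and used only to justify the copula-density decomposition in the subsequent corollary. So your effort here, while mathematically sound, goes well beyond what the paper does, which is simply to quote the result.
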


\textbf{corollary 1: }
By utilizing Sklar's theorem, the joint probability density function (PDF) $f_{V_1, V_2}(v_1, v_2)$ corresponding to the marginal cumulative distribution functions (CDFs) $F_{V_1}(v_1)$ and $F_{V_2}(v_2)$ can be expressed as follows:
\begin{equation}
f_{V_1, V_2}(v_1, v_2) = f_{V_1}(v_1) f_{V_2}(v_2) c(F_{V_1}(v_1), F_{V_2}(v_2))
\end{equation}
In equation (5), $f_{V_1}(v_1)$ and $f_{V_2}(v_2)$ represent the marginal probability density functions of $V_1$ and $V_2$ respectively. The term $c(F_{V_1}(v_1),F_{V_2}(v_2))$ corresponds to the Copula density function, which is defined as $c(F_{V_1}(v_1),F_{V_2}(v_2)) = \frac{\partial^2 C(F_{V_1}(v_1),F_{V_2}(v_2))}{\partial F_{V_1}(v_1)\partial F_{V_2}(v_2)}$.

In copula theory, various copula functions can be employed to model the dependence structure, such as Clayton, Frank, and Gumbel copulas. In this paper, we focus on the FGM (Farlie-Gumbel-Morgenstern) copula function to analyze our channel operating point. The FGM copula offers a simple case for calculating the joint PDFs, making it more practical from a computational standpoint.

\begin{definition}[FGM Copula]\label{def2}
The bivariate FGM Copula with dependence parameter $\theta_F \in [-1,1]$ is defined as:

\begin{equation}
C_F(u_1,u_2) = u_1 u_2(1 + \theta_F (1-u_1)(1-u_2))
\end{equation}
Where $\theta_F \in [-1,0)$ and $\theta_F \in (0,1]$ denote the negative and positive dependence structures respectively, while $\theta_F=0$ demonstrates the independence structure.  
\end{definition}

\section{Paper Novelties}\label{sec5}
In this section, we first describe the Gaussian SW-MAC. Then, we obtain a general achievable rate region and extend it to the Gaussian and wireless versions. Finally, after defining the OP, we derive closed-form expressions for it.

\subsection{Channel Model and Achievable Rate Region for a Two-User Gaussian SW-MAC}\label{subsec7}
We are considering a two-user MAC with specially correlated sources, as shown in Figure~\ref{fig:Multiple Access Channel with Rayleigh Fading Coefficients}. In this scenario, the sources $t_1$ and $t_2$ are sending messages $X_1$ and $X_2$, respectively. However, there are constraints on the inputs, specifically, $\mathbb{E}[|X_1|^2] \leq P_1$ and$\mathbb{E}[|X_2|^2] \leq P_2$, where $P_1$ and $P_2$ represent the maximum power limits of transmitters $t_1$ and $t_2$, respectively. As a result, the received signal $Y$ at the receiver (base station) $r$ can be expressed as follows:

\[
Y = X_1 + X_2 + Z \tag{7}
\]
where $X_1$ and $X_2$ are specially correlated and $Z$ is independent and identically distributed (i.i.d.) additive white Gaussian noise (AWGN) with zero mean and variance $N$ at the receiver.
To better understand the achievable rate region for a two-user Gaussian SW-MAC, we will now delve into the investigation with the support of theorem~{\upshape\ref{thm2}}.

\begin{theorem}[Two-User DM-MAC with Specially Correlated Sources]\label{thm2}
The inner bound rate region for two-user Discrete Memory less Multiple Access Channel (DM-MAC) with specially correlated sources is determined as follows:
\begin{subequations}
\begin{align}
R_1 &\leq I(X_1; Y \vert X_2, V_0) \tag{8a} \\
R_2 &\leq I(X_2; Y \vert X_1, V_0) \tag{8b} \\
R_1 + R_2 &\leq I(X_1, X_2; Y \vert V_0) \tag{8c} \\
R_0 + R_1 + R_2 &\leq I(X_1, X_2; Y) \tag{8d}
\end{align}
\end{subequations}
where $R_0$ represents the rate of the common message transmitted by the sources, $R_1$ denotes the rate of source 1 private message, and $R_2$ corresponds to the rate of source 2 private message. $V_0$ represents the common message sent by the sources.    
\end{theorem}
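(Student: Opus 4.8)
The plan is to establish this inner bound through a standard random coding argument built on superposition coding and joint typicality decoding, exploiting the common/private decomposition that the special correlation structure provides. I would fix an input distribution that factorizes as $p(v_0)\,p(x_1 \vert v_0)\,p(x_2 \vert v_0)$, reflecting that both transmitters share the common component $V_0$ while their private signals are conditionally independent given $V_0$. The claimed region then emerges as the set of rate triples achievable under this distribution, with the standard time-sharing/union over input distributions giving the full convex region.

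\emph{Codebook generation and encoding.} First I would generate $2^{nR_0}$ common-message codewords $v_0^n(w_0)$ independently according to $\prod_i p(v_{0i})$. For each $w_0$, I would superimpose $2^{nR_1}$ codewords $x_1^n(w_0,w_1)$ drawn conditionally i.i.d.\ via $\prod_i p(x_{1i}\vert v_{0i}(w_0))$, and independently $2^{nR_2}$ codewords $x_2^n(w_0,w_2)$ via $\prod_i p(x_{2i}\vert v_{0i}(w_0))$, so that each private codebook sits on top of the cloud center $v_0^n(w_0)$. To transmit $(w_0,w_1,w_2)$, transmitter $1$ sends $x_1^n(w_0,w_1)$ and transmitter $2$ sends $x_2^n(w_0,w_2)$; the common message is carried jointly through $V_0$. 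The receiver then declares $(\hat w_0,\hat w_1,\hat w_2)$ if it is the unique triple for which $(v_0^n(\hat w_0),x_1^n(\hat w_0,\hat w_1),x_2^n(\hat w_0,\hat w_2),y^n)$ is jointly typical.

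\emph{Error analysis.} By the symmetry of the random codebook I would assume $(1,1,1)$ was sent and partition the error events according to which indices are decoded incorrectly. Applying the joint AEP together with the packing lemma to each event produces the four constraints: a wrong $w_1$ with $(w_0,w_2)$ correct yields $R_1 \le I(X_1;Y \vert X_2,V_0)$; a wrong $w_2$ alone yields $R_2 \le I(X_2;Y \vert X_1,V_0)$; simultaneously wrong $(w_1,w_2)$ with correct $w_0$ yields $R_1+R_2 \le I(X_1,X_2;Y \vert V_0)$; and a wrong $w_0$—which detaches all three codewords from $y^n$—yields $R_0+R_1+R_2 \le I(X_1,X_2;Y)$. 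Letting $n\to\infty$ with a vanishing typicality parameter then shows every rate tuple strictly interior to the region is achievable.

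I expect the main obstacle to be the $w_0$-error event. When the cloud center is decoded incorrectly, the private codewords built on top of it become fresh codewords essentially independent of $y^n$, so the number of competing candidates is $2^{n(R_0+R_1+R_2)}$ and each is jointly typical with $y^n$ with probability roughly $2^{-nI(X_1,X_2;Y)}$; carefully bounding this union—and distinguishing it from the mixed events where $w_0$ is correct but the private messages are not—is exactly what forces the fourth, sum-rate constraint and what requires the most care. I would also need to verify that the conditional typicality estimates used in the mixed events are valid under the factorized distribution $p(v_0)p(x_1\vert v_0)p(x_2\vert v_0)$, since the private codewords share the common cloud center and are therefore only conditionally independent given $V_0$.
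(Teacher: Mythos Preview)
Your proposal is correct and is exactly the standard superposition/joint-typicality argument for the Slepian--Wolf MAC with a common message; the paper itself does not supply an independent proof of Theorem~\ref{thm2} but simply cites an external reference, so your sketch is in fact more detailed than what appears in the paper. There is no discrepancy in approach to discuss.
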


\begin{proof}[Proof of Theorem~{\upshape\ref{thm2}}]
The proof of Theorem ~{\upshape\ref{thm2}}, which establishes the validity and correctness of the theorem, is presented in reference \cite{sanyal_game_2021}.
\end{proof}

\begin{theorem}[Two-User Gaussian MAC with Specially Correlated Sources]\label{thm3}
The inner bound rate region for two-user Gaussian MAC with specially correlated sources is obtained as follows:
\begin{subequations}
\begin{align}
R_1 &\leq \frac{1}{2} \log \left(1 + \frac{P_1 - P_0}{N}\right) \tag{9a} \\
R_2 &\leq \frac{1}{2} \log \left(1 + \frac{P_2 - P_0}{N}\right) \tag{9b} \\
R_1 + R_2 &\leq \frac{1}{2} \log \left(1 + \frac{P_1 + P_2 - 2P_0}{N}\right) \tag{9c} \\
R_0 + R_1 + R_2 &\leq \frac{1}{2} \log \left(1 + \frac{P_1 + P_2 + 2P_0}{N}\right) \tag{9d}
\end{align}
\end{subequations}   
\end{theorem}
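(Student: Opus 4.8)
The plan is to specialize the discrete bounds of Theorem~\ref{thm2} to jointly Gaussian inputs and evaluate each mutual-information term explicitly. Since Theorem~\ref{thm3} is an inner (achievable) bound, it suffices to exhibit one admissible input distribution and compute the rates it yields, so no converse argument is required. First I would fix the superposition structure $X_1 = X_1' + X_0$ and $X_2 = X_2' + X_0$, where $X_0 \sim \mathcal{N}(0, P_0)$ is a common Gaussian component carrying the message $V_0$ and transmitted identically (coherently) by both sources, while $X_1' \sim \mathcal{N}(0, P_1 - P_0)$ and $X_2' \sim \mathcal{N}(0, P_2 - P_0)$ are the mutually independent private components, independent of $X_0$ and of the noise $Z \sim \mathcal{N}(0,N)$. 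A quick check confirms the power constraints: $\mathbb{E}[X_i^2] = (P_i - P_0) + P_0 = P_i$ for $i = 1,2$.

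With this choice every term of Theorem~\ref{thm2} collapses to a scalar Gaussian mutual information of the form $\tfrac12\log(1 + \mathrm{SNR})$, obtained as the differential-entropy difference $h(Y\mid\cdot) - h(Z)$, where conditioning on a variable simply subtracts it from the effective received signal. For (9a) I would condition on $(X_2, V_0)$: knowing $V_0$ fixes the shared component $X_0$ and knowing $X_2$ removes user~2's contribution entirely, so the residual channel is $X_1'$ against noise, giving $\tfrac12\log(1 + (P_1 - P_0)/N)$; bound (9b) follows by symmetry. For (9c) I would condition only on $V_0$, which cancels $X_0$ from both inputs and leaves the independent pair $X_1' + X_2'$ as the useful signal, with variance $(P_1 - P_0) + (P_2 - P_0) = P_1 + P_2 - 2P_0$, yielding $\tfrac12\log(1 + (P_1 + P_2 - 2P_0)/N)$.

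The bound (9d) is where the covariance bookkeeping must be handled carefully, and it is the main point of the calculation. Here nothing is conditioned away, so the received signal contains $X_1 + X_2 = X_1' + X_2' + 2X_0$; because the common parts add coherently, the $2X_0$ term contributes variance $4P_0$ rather than $2P_0$, and combined with the private variances the total received signal power is $\mathrm{Var}(X_1 + X_2) = (P_1 - P_0) + (P_2 - P_0) + 4P_0 = P_1 + P_2 + 2P_0$, producing $\tfrac12\log(1 + (P_1 + P_2 + 2P_0)/N)$. The subtlety I would flag is the sign asymmetry between (9c) and (9d): the same correlated component that \emph{subtracts} $2P_0$ once it is decoded and removed (via conditioning on $V_0$) instead \emph{adds} $2P_0$ when it is treated as useful received signal through coherent combining. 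I would therefore write out the variance of $X_1 + X_2$ explicitly to make the combining gain transparent; apart from this, the remaining steps are the routine evaluation of Gaussian mutual informations and require no further machinery.
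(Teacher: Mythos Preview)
Your proposal is correct and matches the paper's proof essentially step for step: the paper likewise fixes the superposition $X_i = V_0 + a_i W_i$ with $\mathbb{E}[V_0^2]=P_0$ and $a_i^2\,\mathbb{E}[W_i^2]=P_i-P_0$ (your $X_0,X_1',X_2'$ in different notation) and then evaluates each mutual information in Theorem~\ref{thm2} as a differential-entropy difference, using exactly the conditioning-and-subtraction logic you describe. Your explicit remark about the coherent-combining gain in (9d)---that $2X_0$ contributes variance $4P_0$, producing the $+2P_0$ versus the $-2P_0$ in (9c)---is a helpful clarification the paper leaves implicit in its variance computation, but the underlying calculation is identical.
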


\textbf{Explanatory note for the proof of Theorem~{\upshape\ref{thm3}}}:
theorem ~{\upshape\ref{thm3}} extends the findings of Theorem ~{\upshape\ref{thm2}} to the continuous alphabet Gaussian channel. As depicted in Figure \ref{fig:Multiple Access Channel with Rayleigh Fading Coefficients}, the additive white Gaussian noise (AWGN) component, denoted as $Z$, exhibits zero mean and a variance of $N$. It represents the inherent noise present in the communication system, the individual allocated powers for the sources are denoted as $P_1$, $P_2$. The transmitted signals can be written as:

\begin{align}
X_1 &= V_0 + a_1 W_1 \tag{10} \\
X_2 &= V_0 + a_2 W_2 \tag{11}
\end{align}

where $W_1$ and $W_2$ represent the private messages sent by sources 1 and 2, respectively. These messages are assumed to be independent of $V_0$ with zero mean. Furthermore, we assume that $E[V_0^2] = P_0$, $E[W_1^2] = P_{11}$, and $E[W_2^2] = P_{21}$. Then, to determine the parameters $a_1$ and $a_2$, we calculate them as follows:

\begin{align}
a_1 &= \sqrt{\frac{P_1 - P_0}{P_{11}}} \tag{12} \\
a_2 &= \sqrt{\frac{P_2 - P_0}{P_{21}}} \tag{13}
\end{align}

By substituting the given parameter values, the transmitted signals can be further simplified as follows:
$X_1 = V_0 + \sqrt{\frac{P_1 - P_0}{P_{11}}} W_1$ and $X_2 = V_0 + \sqrt{\frac{P_2 - P_0}{P_{21}}} W_2$. Now we continue the proof.

\begin{proof}[Proof of Theorem~{\upshape\ref{thm3}}]
The proof of Theorem ~{\upshape\ref{thm3}} is given as follows:

The first transmitter rate:
\begin{align}
R_1 &\leq I(X_1; Y \vert X_2, V_0) \notag \\
    &= h(Y \vert X_2, V_0) - h(Y \vert X_1, X_2, V_0) \notag \\
    &= h(X_1 + X_2 + Z \vert X_2, V_0) - h(X_1 + X_2 + Z \vert X_1, X_2, V_0) \notag \\
    &= h\left(V_0 + \sqrt{\frac{P_1 - P_0}{P_{11}}} W_1 + V_0 + \sqrt{\frac{P_2 - P_0}{P_{21}}} W_2 + Z \bigg| X_2, V_0\right) \notag \\
    &\quad - h\left(V_0 + \sqrt{\frac{P_1 - P_0}{P_{11}}} W_1 + V_0 + \sqrt{\frac{P_2 - P_0}{P_{21}}} W_2 + Z \bigg| X_2, X_1, V_0\right) \notag \\
    &\stackrel{i}{=} h\left(\sqrt{\frac{P_1 - P_0}{P_{11}}} W_1 + Z \bigg| X_2, V_0\right) - h(Z \bigg| X_2, X_1, V_0) \notag \\
    &\stackrel{ii}{=} h\left(\sqrt{\frac{P_1 - P_0}{P_{11}}} W_1 + Z\right) - h(Z) \notag \\
    &= \frac{1}{2} \log \left(2\pi e \left(\sigma_{\sqrt{\frac{P_1 - P_0}{P_{11}}} W_1 + Z}^2\right)\right) - \frac{1}{2} \log \left(2\pi e \sigma_z^2\right) \notag \\
    &= \frac{1}{2} \log \left(1 + \frac{P_1 - P_0}{N}\right) \label{eq:R1_final} \tag{14}
\end{align}
$(i)$ The term due to knowing $(X_1, X_2, V_0)$ implies that the knowledge of these variables does not influence the entropy.
$(ii)$ The independence of $(Z, W_1)$ of $(X_2, V_0)$ and the independence of $(Z)$ of $(X_1, X_2, V_0)$ allow us to separate these variables when calculating the entropy.

The second transmitter rate:
\begin{align}
R_2 &\leq I(X_2; Y \vert X_1, V_0) \notag \\
    &= h(Y \vert X_1, V_0) - h(Y \vert X_1, X_2, V_0) \notag \\
    &= h(X_1 + X_2 + Z \vert X_1, V_0) - h(X_1 + X_2 + Z \vert X_1, X_2, V_0) \notag \\
    &= h\left(V_0 + \sqrt{\frac{P_1 - P_0}{P_{11}}} W_1 + V_0 + \sqrt{\frac{P_2 - P_0}{P_{21}}} W_2 + Z \bigg| X_1, V_0\right) \notag \\
    &\quad - h\left(V_0 + \sqrt{\frac{P_1 - P_0}{P_{11}}} W_1 + V_0 + \sqrt{\frac{P_2 - P_0}{P_{21}}} W_2 + Z \bigg| X_1, X_2, V_0\right) \notag \\
    &\stackrel{i}{=} h\left(\sqrt{\frac{P_2 - P_0}{P_{21}}} W_2 + Z \bigg| X_1, V_0\right) - h(Z \bigg| X_1, X_2, V_0) \notag \\
    &\stackrel{ii}{=} h\left(\sqrt{\frac{P_2 - P_0}{P_{21}}} W_2 + Z\right) - h(Z) \notag \\
    &= \frac{1}{2} \log \left(2\pi e \left(\sigma_{\sqrt{\frac{P_2 - P_0}{P_{21}}} W_2 + Z}^2\right)\right) - \frac{1}{2} \log \left(2\pi e \sigma_z^2\right) \notag \\
    &= \frac{1}{2} \log \left(1 + \frac{P_2 - P_0}{N}\right) \tag{15}
\end{align}
The reason for $(i)$ is that knowing $X_1,X_2,V_0$ does not affect the entropy since these variables are already known. The reason for $(ii)$ is the independence of $(Z,W_2)$ from $X_1,V_0$ and the independence of $(Z)$ of $(X_1,X_2,V_0)$.

The sum of the private message rates of two transmitters:

\begingroup
\scriptsize 
\begin{align}
R_1 + R_2 &\leq I(X_1, X_2; Y \vert V_0) \notag \\
         &= h(Y \vert V_0) - h(Y \vert V_0, X_1, X_2) \notag \\
         &= h(X_1 + X_2 + Z \vert V_0) - h(X_1 + X_2 + Z \vert X_1, X_2, V_0) \notag \\
         &= h\left(\sqrt{\frac{P_1 - P_0}{P_{11}}} W_1 + V_0 + \sqrt{\frac{P_2 - P_0}{P_{21}}} W_2 + V_0 + Z \bigg| V_0\right) \notag \\
         &\quad - h\left(\sqrt{\frac{P_1 - P_0}{P_{11}}} W_1 + V_0 + \sqrt{\frac{P_2 - P_0}{P_{21}}} W_2 + V_0 + Z \bigg| X_1, X_2, V_0\right) \notag \\
         &= {}^{\text{i}} h\left(\sqrt{\frac{P_1 - P_0}{P_{11}}} W_1 + \sqrt{\frac{P_2 - P_0}{P_{21}}} W_2 + Z \bigg| V_0\right) - h(Z \bigg| X_1, X_2, V_0) \notag \\
         &= {}^{\text{ii}} h\left(\sqrt{\frac{P_1 - P_0}{P_{11}}} W_1 + \sqrt{\frac{P_2 - P_0}{P_{21}}} W_2 + Z\right) - h(Z) \notag \\
         &= \frac{1}{2} \log \left(2\pi e \left(\sigma^2_{\sqrt{\frac{P_1 - P_0}{P_{11}}} W_1 + \sqrt{\frac{P_2 - P_0}{P_{21}}} W_2 + Z}\right)\right) - \frac{1}{2} \log \left(2\pi e N\right) \notag \\
         &= \frac{1}{2} \log \left(1 + \frac{P_1 + P_2 - 2P_0}{N}\right) \tag{16}
\end{align}
\endgroup

The reasons for steps (i) and (ii) are:

\begin{enumerate}
    \item The entropy is not influenced by knowing $\mathbf{X}_1, \mathbf{X}_2$, and $V_0$ since these variables are already known.
    \item The independence of $(Z, \mathbf{W}_1)$ from $(\mathbf{X}_2, V_0)$ and the independence of $Z$ from $(\mathbf{X}_1, \mathbf{X}_2, V_0)$.
\end{enumerate}

The sum of the common and private Rates:

\begin{multline}
    R_0+R_1+R_2 \leq I(X_1,X_2;Y) = h(Y) - h(Y|X_1,X_2) \\
    = h(X_1+X_2+Z) - h(X_1+X_2+Z|X_1,X_2) \\
    = h\left(\sqrt{\frac{P_1-P_0}{P_{11}}} W_1+V_0+\sqrt{\frac{P_2-P_0}{P_{21}}} W_2+V_0+Z\right) - h(Z) \\
    = \frac{1}{2} \log\left(2\pi e \left(\sigma_{\sqrt{\frac{P_1-P_0}{P_{11}}} W_1+V_0+\sqrt{\frac{P_2-P_0}{P_{21}}} W_2+V_0+Z}\right)^2\right) \\
    - \frac{1}{2} \log(2\pi e (\sigma_Z)^2) \\
    = \frac{1}{2} \log\left(1+\frac{P_1+P_2+2P_0}{N}\right)
    \tag{17}
\end{multline}

\end{proof}

\subsection{Wireless SW-MAC}\label{subsec8}

In a two-user Rayleigh fading MAC with specially correlated sources as shown in Figure~\ref{fig:Multiple Access Channel with Rayleigh Fading Coefficients}, transmitters (users) $t1$ and $t2$ transmit their respective messages $X1$ and $X2$. The inputs are restricted such that $E[|X1|]^2 < P1$ and $E[|X2|]^2 < P2$, where $P1$ and $P2$ are the maximum power of transmitters $t1$ and $t2$ respectively. Accordingly, the received signal $Y$ at the receiver, denoted as $r$ (base station), can be expressed as follows:

\begin{equation}
    Y = h_1 X_1 + h_2 X_2 + Z \tag{18}
\end{equation}

At the receiver, the signal $Y$ is impacted by independent identically distributed (i.i.d.) Additive White Gaussian Noise (AWGN) denoted as $Z$, with a mean of zero and variance $N$. Additionally, the fading channel coefficients $h_1$ and $h_2$ correspond to Rayleigh fading processes and are assumed to be correlated.

\subsection{The Capacity Inner Bound for Wireless Multiple Access Channels with Specially Correlated Sources
Referencing (18), Theorem 3 is extended to the wireless version as follows}
The capacity region was determined through the following calculation process:

\begin{align}
    &R_1 \leq \frac{1}{2} \log(1 + \frac{|h_1|^2 (P_1 - P_0)}{N}) \tag{19} \\
    &R_2 \leq \frac{1}{2} \log(1 + \frac{|h_2|^2 (P_2 - P_0)}{N}) \tag {20} \\
    &R_1 + R_2 \leq \frac{1}{2} \log(1 + \frac{|h_1|^2 (P_1 - P_0) + |h_2|^2 (P_2 - P_0)}{N}) \tag {21} \\
    &R_0 + R_1 + R_2 \leq \frac{1}{2} \log(1 + \frac{|h_1|^2 P_1 + |h_2|^2 P_2 + 2|h_1||h_2|P_0}{N}) \tag {22}
\end{align}

\textbf{Corollary 2:} As seen intuitively, the above relations (19)-(22) with $P_0 = 0$ are reduced to the corresponding region for wireless MAC with independent sources.

\subsection{Definition and Analysis of Outage Probability (OP)}\label{subsec9}
In this section, we focus on two key aspects. First, we define the OP as a fundamental metric for analyzing wireless channel. Second, we employ the FGM copula to calculate the OP, enabling us to quantify the communication performances. 

\begin{definition}[OP]\label{def3} OP plays a crucial role in assisting the performance of the wireless communication systems, and is defined as the probability that the channel capacity is less than a certain information rate $R_0 > C$. Mathematically, OP can be defined as follows:
\begin{equation}
 P[R_0 > C] = P\left[\frac{|h|^2}{N} < \frac{1}{P} \left(2^{nR_0} - 1\right)\right] \tag{23}
\end{equation}
Where $R_0$ is the intended rate, and $C$ is the channel capacity. 
\end{definition}
    
\section{Outage Probability (OP) for the Wireless SW-MAC}\label{sec}
The OP for specially correlated sources MAC can be expressed using the following theorem:

\begin{theorem}[The OP in wireless specially correlated sources MAC is given ] \label{thm4}
\begin{align}
    P_{\text{out}} = 1 - \Bigg[ & \frac{\lambda_2 e^{-\left(\frac{\lambda_1 \gamma}{P_1 - P_0}\right)}}{(\lambda_2 - \lambda_1 P)} 
    + \theta \left( \frac{\lambda_2 e^{-\left(\frac{\gamma \lambda_1}{(P_1 - P_0)}\right)}}{(\lambda_2 - \lambda_1 P)} 
    - \frac{2\lambda_2 e^{-\left(\frac{\gamma \lambda_1}{(P_1 - P_0)}\right)}}{(2\lambda_2 - P \lambda_1)} \right. \nonumber \\
    & \left. - \frac{\lambda_2 e^{-\left(\frac{2 \gamma \lambda_1}{(P_1 - P_0)}\right)}}{(\lambda_2 - 2P \lambda_1)} 
    + \frac{\lambda_2 e^{-\left(\frac{2 \gamma \lambda_1}{(P_1 - P_0)}\right)}}{(\lambda_2 - \lambda_1 P)} \right) \Bigg]
    \tag{24}
\end{align}
Where $\lambda_1 = \frac{1}{2\sigma_1^2}$, $\lambda_2 = \frac{1}{2\sigma_2^2}$, and $P = \frac{P_2 - P_0}{P_1 - P_0}$.
\end{theorem}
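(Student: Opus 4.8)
The plan is to read the outage event off the wireless sum-rate inner bound~(21), translate it into a region in the $(g_1,g_2)$-plane with $g_i=|h_i|^2$, integrate the FGM-coupled joint density of $(g_1,g_2)$ over the complementary (non-outage) region, and finally set $P_{\text{out}}=1-P_{\text{non-outage}}$. First I would invert the sum-rate constraint: writing the target-rate SNR threshold as $\gamma=N(2^{2R}-1)$, the system is in outage exactly when the capacity in~(21) drops below the target, i.e. when $g_1(P_1-P_0)+g_2(P_2-P_0)<\gamma$. Hence the non-outage region is the half-plane $g_1>\frac{\gamma-(P_2-P_0)g_2}{P_1-P_0}$, and with $P=\frac{P_2-P_0}{P_1-P_0}$ the inner lower limit for $g_1$ is $a(g_2)=\frac{\gamma}{P_1-P_0}-Pg_2$.

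Next I would assemble the joint density from Corollary~1 and Definition~2. With exponential marginals $f(g_i)=\lambda_i e^{-\lambda_i g_i}$ and $F(g_i)=1-e^{-\lambda_i g_i}$, the FGM copula density is $c_F=1+\theta(1-2F(g_1))(1-2F(g_2))$, so that $f(g_1,g_2)=\lambda_1\lambda_2 e^{-\lambda_1 g_1}e^{-\lambda_2 g_2}\bigl[1+\theta(2e^{-\lambda_1 g_1}-1)(2e^{-\lambda_2 g_2}-1)\bigr]$. I would then split $P_{\text{non-outage}}$ into the independent part ($\theta=0$) and the copula correction proportional to $\theta$, treating each separately.

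For the independent part I would perform the inner $g_1$-integral $\int_{a(g_2)}^{\infty}\lambda_1 e^{-\lambda_1 g_1}\,dg_1=e^{-\lambda_1 a(g_2)}=e^{-\lambda_1\gamma/(P_1-P_0)}e^{\lambda_1 P g_2}$ followed by the outer integral $\int_{0}^{\infty}\lambda_2 e^{-\lambda_2 g_2}e^{\lambda_1 P g_2}\,dg_2=\frac{\lambda_2}{\lambda_2-\lambda_1 P}$, producing the leading term $\frac{\lambda_2 e^{-\lambda_1\gamma/(P_1-P_0)}}{\lambda_2-\lambda_1 P}$. For the $\theta$-part I would use the identity $\int_{a}^{\infty}\lambda_1 e^{-\lambda_1 g_1}(2e^{-\lambda_1 g_1}-1)\,dg_1=e^{-2\lambda_1 a}-e^{-\lambda_1 a}$; multiplying the result by $\lambda_2 e^{-\lambda_2 g_2}(2e^{-\lambda_2 g_2}-1)$ and integrating over $g_2\in[0,\infty)$ generates exactly four exponential integrals, whose decay rates are $2\lambda_2-2\lambda_1 P$, $\lambda_2-2\lambda_1 P$, $2\lambda_2-\lambda_1 P$ and $\lambda_2-\lambda_1 P$. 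These yield denominators $\lambda_2-\lambda_1 P$, $\lambda_2-2\lambda_1 P$, $2\lambda_2-\lambda_1 P$ and $\lambda_2-\lambda_1 P$ respectively, matching the four bracketed terms of~(24). Setting $P_{\text{out}}=1-P_{\text{non-outage}}$ then gives the claim.

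The main obstacle — and the most delicate point — is the treatment of the integration limits together with the bookkeeping of the four cross terms in the copula correction. The substitution extends the outer $g_2$-integral to $[0,\infty)$ while keeping $a(g_2)$ as the formal lower limit even where it turns negative; this is precisely what collapses the would-be truncated integrals into the clean single-exponential closed form~(24). I would therefore make the convergence conditions explicit, the binding one being $\lambda_2>2\lambda_1 P$ (equivalently $\sigma_1^2(P_1-P_0)>2\sigma_2^2(P_2-P_0)$), which guarantees that all four exponential integrals in the $\theta$-term converge and that the leading independent term is positive.
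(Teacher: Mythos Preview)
Your proposal is correct and follows essentially the same route as the paper: both read the outage event off the sum-rate bound~(21), write the complementary probability as the double integral $1-\int_0^\infty\int_{(\gamma-(P_2-P_0)g_2)/(P_1-P_0)}^{\infty} f(g_1,g_2)\,dg_1\,dg_2$ with the FGM-coupled exponential density, and evaluate term by term. Your splitting into the $\theta=0$ piece plus the four-term copula correction, and your explicit note on the convergence condition $\lambda_2>2\lambda_1 P$ and the negative-lower-limit issue, are in fact more careful than the paper, which simply states the integral and reports the result.
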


\begin{proof}[Proof of Theorem~{\upshape\ref{thm4}}]
    
First, as claimed by the OP definition, we have the following:

\begin{align}
P_\text{out} &= P[R \geq R_1 + R_2] \notag \\
             &= P\left[R \geq \frac{1}{2} \log\left(1 + \frac{|h_1|^2 (P_1 - P_0) + |h_2|^2 (P_2 - P_0)}{N}\right)\right] \notag \\
             &= P\left[|h_1|^2 (P_1 - P_0) + |h_2|^2 (P_2 - P_0) \leq N(2^{2R} - 1)\right] \notag \\
             &= P[C(P_1 - P_0) + D(P_2 - P_0) \leq \gamma] \notag \\
             &= 1 - \int_0^\infty \int_{(\gamma - d(P_2 - P_0))/(P_1 - P_0)}^\infty f_{C,D}(c,d) \, dc \, dd \tag{25}
\end{align}

Where $D = |h_1|^2$, $C = |h_2|^2$, and $\gamma = N(2^{2R} - 1)$.

Since the coefficients of the fading channels are assumed to be correlated, we need the joint PDF of \(|h_1|^2\) and \(|h_2|^2\) to compute the OP. According to earlier assumptions, \(h_1\) and \(h_2\) are correlated with a Rayleigh distribution, and the square of a Rayleigh distribution (i.e., \(|h_1|^2, |h_2|^2\)) is an exponential distribution. Therefore, we have the density distributions for \(|h_1|^2\) and \(|h_2|^2\) respectively, as \(f_C(c) = \frac{1}{2\sigma_1^2} e^{-c / (2\sigma_1^2)}\) and \(f_D(d) = \frac{1}{2\sigma_2^2} e^{-d / (2\sigma_2^2)}\) by parameters \(\sigma_1^2\) and \(\sigma_2^2\). Now by inserting them into (5), and using (6), we can show the FGM copula as follows:

\begingroup
\scriptsize 
\begin{align*}
C(F_C(c), F_D(d)) &= F_C(c) F_D(d) [1 + \theta (1 - F_C(c))(1 - F_D(d))] \\
&= (1 - e^{-c/(2\sigma_1^2)}) (1 - e^{-d/(2\sigma_2^2)}) \left[1 + \theta e^{-c/(2\sigma_1^2)} e^{-d/(2\sigma_2^2)}\right] \\
&= 1 - e^{-d/(2\sigma_2^2)} - e^{-c/(2\sigma_1^2)} + e^{-c/(2\sigma_1^2)} e^{-d/(2\sigma_2^2)} \\
&\quad + \theta \left[e^{-c/(2\sigma_1^2)} e^{-d/(2\sigma_2^2)} - e^{-c/(2\sigma_1^2)} e^{-2d/(2\sigma_2^2)} \right. \\
&\qquad \left. - e^{-2c/(2\sigma_1^2)} e^{-d/(2\sigma_2^2)} + e^{-2c/(2\sigma_1^2)} e^{-2d/(2\sigma_2^2)}\right] \tag{26}
\end{align*}
\endgroup

By calculating \( f_{C,D}(c,d) \) using (5), we have:

\begingroup
\scriptsize 
\begin{align}
f_{C,D}(c,d) &= \frac{1}{4\sigma_1^2 \sigma_2^2} e^{-c/(2\sigma_1^2)} e^{-d/(2\sigma_2^2)} \notag \\
&\quad + \theta \left[\frac{1}{4\sigma_1^2 \sigma_2^2} e^{-c/(2\sigma_1^2)} e^{-d/(2\sigma_2^2)} - \frac{2}{4\sigma_1^2 \sigma_2^2} e^{-c/(2\sigma_1^2)} e^{-2d/(2\sigma_2^2)} \right. \notag \\
&\qquad \left. - \frac{2}{4\sigma_1^2 \sigma_2^2} e^{-2c/(2\sigma_1^2)} e^{-d/(2\sigma_2^2)} + \frac{4}{4\sigma_1^2 \sigma_2^2} e^{-2c/(2\sigma_1^2)} e^{-2d/(2\sigma_2^2)} \right] \tag{27}
\end{align}
\endgroup

Now by inserting (27) into (25), we have:

\begin{align}
P_{\text{out}} &= 1 - \int_0^\infty \int_{\frac{\gamma - d(P_2 - P_0)}{P_1 - P_0}}^\infty \frac{1}{4\sigma_1^2 \sigma_2^2} \left[ e^{-c/(2\sigma_1^2)} e^{-d/(2\sigma_2^2)} \right. \notag \\
&\quad + \theta \left( e^{-c/(2\sigma_1^2)} e^{-d/(2\sigma_2^2)} - 2 e^{-c/(2\sigma_1^2)} e^{-2d/(2\sigma_2^2)} \right. \notag \\
&\qquad \left. \left. - 2 e^{-2c/(2\sigma_1^2)} e^{-d/(2\sigma_2^2)} + 4 e^{-2c/(2\sigma_1^2)} e^{-2d/(2\sigma_2^2)} \right) \right] \, dc \, dd \tag{28}
\end{align}

And by calculating the above integrals, the OP is given by (24).
\end{proof}

\section{Simulation Results and Comparison}
In this section, we present the numerical results of the obtained performance for our channel model. This simulation compares the performance of the channel under correlated and uncorrelated Rayleigh fading coefficients. The list of channel parameters can be found in Tables~\ref{tab1}.

\begin{table}[h]
\caption{Channel Parameters}\label{tab1}%
\begin{tabular}{@{}lll@{}}
\toprule
Parameters & Descriptions  & Values \\
\midrule
\( P_1 \)   &  Power of First Transmitter   & 1, 5, 10 Watts \\
\( P_2 \)   & Power of Second Transmitter  & 1, 5, 10, Watts  \\
\( \alpha \) &  Path-loss exponent    & 2.8  \\
\( R_1 \) & Transmission rate of First Transmitter & 0.44 Mbps \\  
\( R_2 \) & Transmission rate of Second Transmitter & 1.75 Mbps \\  
\( N \) & Power of noise & \( 10^{-5} \) \\  
\botrule
\end{tabular}
\end{table}

\begin{figure}[ht]
  \centering
  \includegraphics[width=\linewidth]{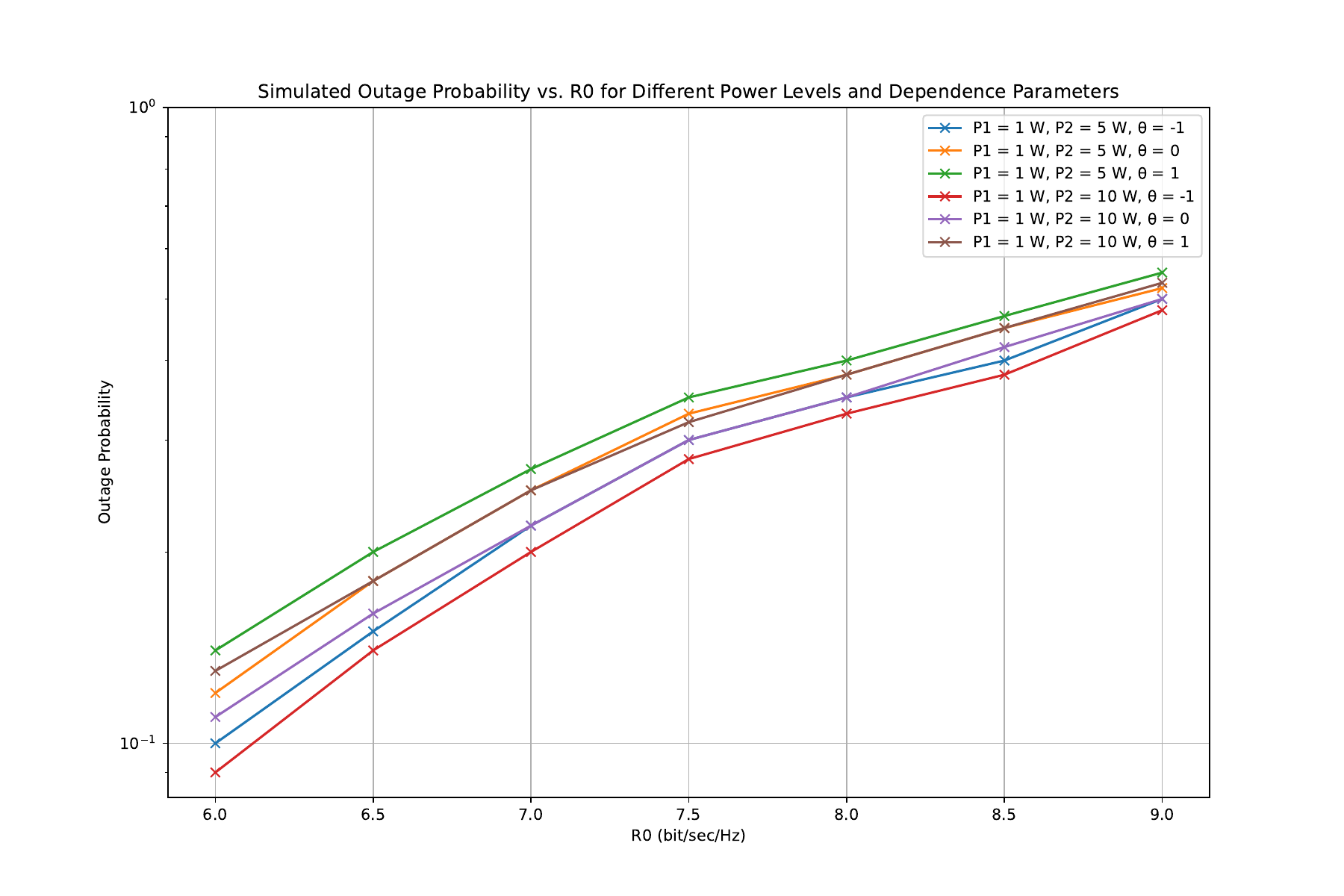}
  \caption{Outage Probability vs $R_0$ For $P2=5$ and $10$ watts, and $P1=1$ watt, and Various Dependence Parameters.}
  \label{fig:p_outage_R_p1<p2}
\end{figure}

\begin{figure}[htbp]
  \centering
  \includegraphics[width=\linewidth]{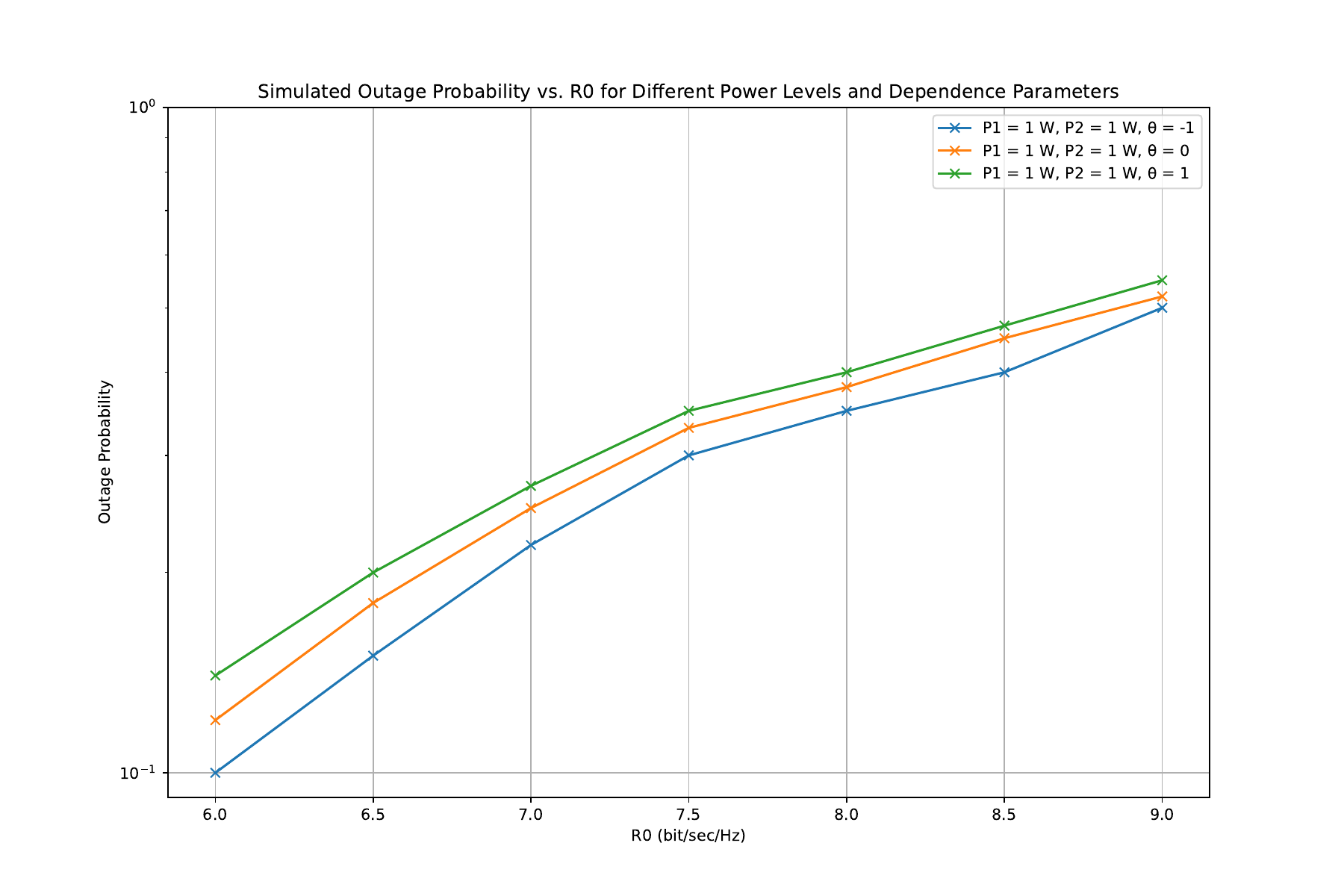}
  \caption{Outage Probability vs $R_0$ For $P1=1$ watt, and $P2=1$ watt, and Various Dependence Parameters.}
  \label{fig:p_outage_R_p1=p2}
\end{figure}

\begin{figure}[htbp]
  \centering
  \includegraphics[width=\linewidth]{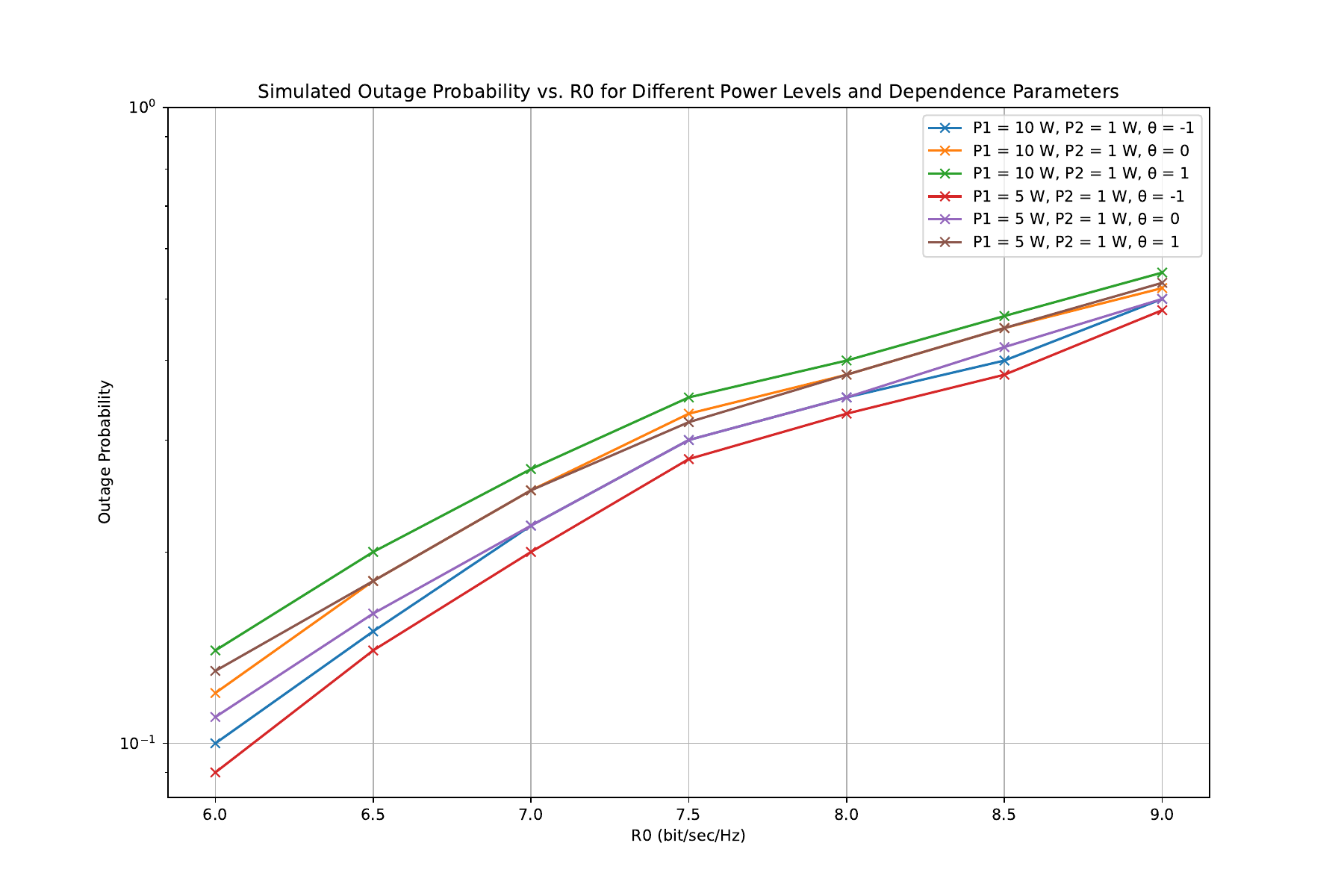}
  \caption{Outage Probability vs $R_0$ For $P1=5$ and $10$ watts, and $P2=1$ watt, and Various Dependence Parameters.}
  \label{fig:p_outage_R_p1>p2}
\end{figure}

We illustrated  the impact of the required threshold information rate ($R$) on the OP in three different realistic scenarios represented as Figure ~\ref{fig:p_outage_R_p1<p2}, Figure ~\ref{fig:p_outage_R_p1=p2}, and Figure ~\ref{fig:p_outage_R_p1>p2}. These simulations examine variations in power levels, showing how the OP changes with increasing $R$ and how it depends on the dependence parameter $\theta$. 
In our first experiment, we investigated power levels of $P2=5$ and $10$ watts, and $P1=1$ watt. We observed that for negative dependence structures ($\theta \in [-1,0)$), the performance is better compared to positive dependence structures ($\theta \in (0,1]$). Additionally, in the case of uncorrelated fading ($\theta = 0$), there is a lower OP compared to the positive dependence structure, resulting in less interference. As ($P2$) increases from $5$ W to $10$ W, the OP decreases, indicating improved performance. See  Figure ~\ref{fig:p_outage_R_p1<p2}.
Next, we consider the scenario where the power of the first transmitter equals the power of the second transmitter ($P = 1$ watt). It is evident that negative dependence structures outperform positive dependence structures, as shown in Figure ~\ref{fig:p_outage_R_p1=p2}. 
When the power of the first transmitter is greater than that of the second transmitter, as shown in  Figure ~\ref{fig:p_outage_R_p1>p2}, the channel exhibits better performance in terms of OP under negative dependence structures ($\theta \in [-1,0]$) compared to positive dependence structures ($\theta \in (0,1]$). Conversely, independent structures ($\theta = 0$) favor the uncorrelated fading  case, illustrating  superior performance compared to positive dependence structures. Most notably, as the power ($P1$) decreases from 10 W to 5 W, the OP decreases, indicating enhanced performance.

\FloatBarrier  
\section{Conclusion}
We studied the communication performance of wireless SW-MAC, which is a generalized version of continuous alphabet MACs. In this study, we assumed the channel coefficients to be correlated. Using the FGM copula function to account for negative and positive correlations, we developed mathematical expressions and upper/lower bounds for the OP. We analyzed the influence of coefficient correlation under both negative and positive dependence structures. The results confirm that negative correlated coefficients have a positive impact on the performance of wireless MAC compared to the scenario with independent coefficients.

\subsection*{Future Work}
The understanding of SW-MAC has a significant impact on important issues such as channel assignment, power allocation, and energy efficiency. The results of this study open up several avenues for further research, particularly in exploring correlated channel coefficients in wireless communications. This exploration can lead to more efficient designs and reduced computational intensity and memory requirements in these fields.
The findings prompt important questions about the impact of increasing channels and users on each other, which will be thoroughly investigated in future studies.
It would be beneficial for future research to concentrate on exploring Rician distribution channel coefficients and utilizing different copula functions to evaluate the performance of wireless channel modes. By delving into these areas, subsequent studies can expand on the existing foundation, potentially resulting in significant improvements in the efficiency and effectiveness of SW-MACs in complex communication environments.

\bibliography{sn-bibliography}


\begin{thebibliography}{27}
\ifx \bisbn   \undefined \def \bisbn  #1{ISBN #1}\fi
\ifx \binits  \undefined \def \binits#1{#1}\fi
\ifx \bauthor  \undefined \def \bauthor#1{#1}\fi
\ifx \batitle  \undefined \def \batitle#1{#1}\fi
\ifx \bjtitle  \undefined \def \bjtitle#1{#1}\fi
\ifx \bvolume  \undefined \def \bvolume#1{\textbf{#1}}\fi
\ifx \byear  \undefined \def \byear#1{#1}\fi
\ifx \bissue  \undefined \def \bissue#1{#1}\fi
\ifx \bfpage  \undefined \def \bfpage#1{#1}\fi
\ifx \blpage  \undefined \def \blpage #1{#1}\fi
\ifx \burl  \undefined \def \burl#1{\textsf{#1}}\fi
\ifx \doiurl  \undefined \def \doiurl#1{\url{https://doi.org/#1}}\fi
\ifx \betal  \undefined \def \betal{\textit{et al.}}\fi
\ifx \binstitute  \undefined \def \binstitute#1{#1}\fi
\ifx \binstitutionaled  \undefined \def \binstitutionaled#1{#1}\fi
\ifx \bctitle  \undefined \def \bctitle#1{#1}\fi
\ifx \beditor  \undefined \def \beditor#1{#1}\fi
\ifx \bpublisher  \undefined \def \bpublisher#1{#1}\fi
\ifx \bbtitle  \undefined \def \bbtitle#1{#1}\fi
\ifx \bedition  \undefined \def \bedition#1{#1}\fi
\ifx \bseriesno  \undefined \def \bseriesno#1{#1}\fi
\ifx \blocation  \undefined \def \blocation#1{#1}\fi
\ifx \bsertitle  \undefined \def \bsertitle#1{#1}\fi
\ifx \bsnm \undefined \def \bsnm#1{#1}\fi
\ifx \bsuffix \undefined \def \bsuffix#1{#1}\fi
\ifx \bparticle \undefined \def \bparticle#1{#1}\fi
\ifx \barticle \undefined \def \barticle#1{#1}\fi
\bibcommenthead
\ifx \bconfdate \undefined \def \bconfdate #1{#1}\fi
\ifx \botherref \undefined \def \botherref #1{#1}\fi
\ifx \url \undefined \def \url#1{\textsf{#1}}\fi
\ifx \bchapter \undefined \def \bchapter#1{#1}\fi
\ifx \bbook \undefined \def \bbook#1{#1}\fi
\ifx \bcomment \undefined \def \bcomment#1{#1}\fi
\ifx \oauthor \undefined \def \oauthor#1{#1}\fi
\ifx \citeauthoryear \undefined \def \citeauthoryear#1{#1}\fi
\ifx \endbibitem  \undefined \def \endbibitem {}\fi
\ifx \bconflocation  \undefined \def \bconflocation#1{#1}\fi
\ifx \arxivurl  \undefined \def \arxivurl#1{\textsf{#1}}\fi
\csname PreBibitemsHook\endcsname

\bibitem[\protect\citeauthoryear{Shannon}{1961}]{shannon_two-way_nodate}
\begin{barticle}
\bauthor{\bsnm{Shannon}, \binits{C.E.}}:
\batitle{Two-way communication channels}.
\bjtitle{Bell System Technical Journal}
\bvolume{40},
\bfpage{379}--\blpage{423}
(\byear{1961})
\end{barticle}
\endbibitem

\bibitem[\protect\citeauthoryear{Costa}{1983}]{costa_writing_1983}
\begin{barticle}
\bauthor{\bsnm{Costa}, \binits{M.}}:
\batitle{Writing on dirty paper ({Corresp}.)}.
\bjtitle{IEEE Transactions on Information Theory}
\bvolume{29}(\bissue{3}),
\bfpage{439}--\blpage{441}
(\byear{1983})
\doiurl{10.1109/TIT.1983.1056659} .
Accessed 2024-06-04
\end{barticle}
\endbibitem

\bibitem[\protect\citeauthoryear{M.~Pinsker}{}]{m_pinsker_coding_nodate}
\begin{botherref}
\oauthor{\bsnm{M.~Pinsker}, \binits{S.G.}}:
Coding for channels with random parameters
\textbf{9},
19--31
\end{botherref}
\endbibitem

\bibitem[\protect\citeauthoryear{Slepian and Wolf}{1973}]{slepian_coding_1973}
\begin{barticle}
\bauthor{\bsnm{Slepian}, \binits{D.}},
\bauthor{\bsnm{Wolf}, \binits{J.K.}}:
\batitle{A {Coding} {Theorem} for {Multiple} {Access} {Channels} {With} {Correlated} {Sources}}.
\bjtitle{Bell System Technical Journal}
\bvolume{52}(\bissue{7}),
\bfpage{1037}--\blpage{1076}
(\byear{1973})
\doiurl{10.1002/j.1538-7305.1973.tb02004.x} .
Accessed 2024-06-04
\end{barticle}
\endbibitem

\bibitem[\protect\citeauthoryear{Prelov and Van Der~Meulen}{1991}]{prelov_asymptotic_1991}
\begin{bchapter}
\bauthor{\bsnm{Prelov}, \binits{V.V.}},
\bauthor{\bsnm{Van Der~Meulen}, \binits{E.C.}}:
\bctitle{Asymptotic {Expansion} {For} {The} {Capacity} {Region} {Of} {The} {Multiple}-access {Channel} {With} {Common} {Information} {And} {Almost} {Gaussian} {Noise}}.
In: \bbtitle{Proceedings. 1991 {IEEE} {International} {Symposium} on {Information} {Theory}},
pp. \bfpage{300}--\blpage{300}.
\bpublisher{IEEE},
\blocation{Budapest, Hungary}
(\byear{1991}).
\doiurl{10.1109/ISIT.1991.695356} .
\burl{http://ieeexplore.ieee.org/document/695356/}
Accessed 2024-06-04
\end{bchapter}
\endbibitem

\bibitem[\protect\citeauthoryear{Biglieri and Lai}{2016}]{biglieri_impact_2016}
\begin{bchapter}
\bauthor{\bsnm{Biglieri}, \binits{E.}},
\bauthor{\bsnm{Lai}, \binits{I.-W.}}:
\bctitle{The impact of independence assumptions on wireless communication analysis}.
In: \bbtitle{2016 {IEEE} {International} {Symposium} on {Information} {Theory} ({ISIT})},
pp. \bfpage{2184}--\blpage{2188}.
\bpublisher{IEEE},
\blocation{Barcelona, Spain}
(\byear{2016}).
\doiurl{10.1109/ISIT.2016.7541686} .
\burl{https://ieeexplore.ieee.org/document/7541686}
Accessed 2024-06-04
\end{bchapter}
\endbibitem

\bibitem[\protect\citeauthoryear{Akuon and Xu}{2016}]{akuon_optimal_2016}
\begin{barticle}
\bauthor{\bsnm{Akuon}, \binits{P.O.}},
\bauthor{\bsnm{Xu}, \binits{H.}}:
\batitle{Optimal error analysis of receive diversity schemes on arbitrarily correlated {Rayleigh} fading channels}.
\bjtitle{IET Communications}
\bvolume{10}(\bissue{7}),
\bfpage{854}--\blpage{861}
(\byear{2016})
\doiurl{10.1049/iet-com.2015.0965} .
Accessed 2024-06-04
\end{barticle}
\endbibitem

\bibitem[\protect\citeauthoryear{Liu et~al.}{2010}]{liu_asymptotic_2010}
\begin{barticle}
\bauthor{\bsnm{Liu}, \binits{S.}},
\bauthor{\bsnm{Cheng}, \binits{J.}},
\bauthor{\bsnm{Beaulieu}, \binits{N.}}:
\batitle{Asymptotic error analysis of diversity schemes on arbitrarily correlated rayleigh channels}.
\bjtitle{IEEE Transactions on Communications}
\bvolume{58}(\bissue{5}),
\bfpage{1351}--\blpage{1355}
(\byear{2010})
\doiurl{10.1109/TCOMM.2010.05.080458} .
Accessed 2024-06-04
\end{barticle}
\endbibitem

\bibitem[\protect\citeauthoryear{Nelsen}{2006}]{roger_b_nelsen_introduction_2006}
\begin{bbook}
\bauthor{\bsnm{Nelsen}, \binits{R.B.}}:
\bbtitle{An {Introduction} to {Copulas}}.
\bsertitle{Springer {Series} in {Statistics}}.
\bpublisher{Springer},
\blocation{New York, NY}
(\byear{2006}).
\doiurl{10.1007/0-387-28678-0} .
\burl{http://link.springer.com/10.1007/0-387-28678-0}
Accessed 2024-06-04
\end{bbook}
\endbibitem

\bibitem[\protect\citeauthoryear{Peters et~al.}{2014}]{peters_communications_2014}
\begin{bchapter}
\bauthor{\bsnm{Peters}, \binits{G.W.}},
\bauthor{\bsnm{Myrvoll}, \binits{T.A.}},
\bauthor{\bsnm{Matsui}, \binits{T.}},
\bauthor{\bsnm{Nevat}, \binits{I.}},
\bauthor{\bsnm{Septier}, \binits{F.}}:
\bctitle{Communications meets copula modeling: {Non}-standard dependence features in wireless fading channels}.
In: \bbtitle{2014 {IEEE} {Global} {Conference} on {Signal} and {Information} {Processing} ({GlobalSIP})},
pp. \bfpage{1224}--\blpage{1228}.
\bpublisher{IEEE},
\blocation{Atlanta, GA, USA}
(\byear{2014}).
\doiurl{10.1109/GlobalSIP.2014.7032317} .
\burl{http://ieeexplore.ieee.org/document/7032317/}
Accessed 2024-06-04
\end{bchapter}
\endbibitem

\bibitem[\protect\citeauthoryear{Besser and Jorswieck}{2021}]{besser_copula-based_2021}
\begin{barticle}
\bauthor{\bsnm{Besser}, \binits{K.-L.}},
\bauthor{\bsnm{Jorswieck}, \binits{E.A.}}:
\batitle{Copula-{Based} {Bounds} for {Multi}-{User} {Communications}–{Part} {II}: {Outage} {Performance}}.
\bjtitle{IEEE Communications Letters}
\bvolume{25}(\bissue{1}),
\bfpage{8}--\blpage{12}
(\byear{2021})
\doiurl{10.1109/LCOMM.2020.3023050} .
Accessed 2024-06-04
\end{barticle}
\endbibitem

\bibitem[\protect\citeauthoryear{~ et~al.}{}]{noauthor_performance_nodate}
\begin{botherref}
\oauthor{, \binits{F.J.}},
\oauthor{\bsnm{Wang}, \binits{J.}},
\oauthor{\bsnm{Li}, \binits{L.}}:
Performance analysis of correlated multiuser {MIMO} systems with imperfect channel state information using {Copula}.
IEEE Transactions on Vehicular Technology
\textbf{67}
\end{botherref}
\endbibitem

\bibitem[\protect\citeauthoryear{Tavsanoglu et~al.}{2021}]{tavsanoglu_wireless_2021}
\begin{bchapter}
\bauthor{\bsnm{Tavsanoglu}, \binits{A.}},
\bauthor{\bsnm{Briso}, \binits{C.}},
\bauthor{\bsnm{Arancibia}, \binits{R.B.}},
\bauthor{\bsnm{Scott}, \binits{M.}},
\bauthor{\bsnm{Carmena-Cabanillas}, \binits{D.}}:
\bctitle{Wireless {Communications} in {High} {Speed} {Vacuum} trains environment for {Vehicular} {Technology} {Conference}}.
In: \bbtitle{2021 {IEEE} 94th {Vehicular} {Technology} {Conference} ({VTC2021}-{Fall})},
pp. \bfpage{1}--\blpage{5}.
\bpublisher{IEEE},
\blocation{Norman, OK, USA}
(\byear{2021}).
\doiurl{10.1109/VTC2021-Fall52928.2021.9625453} .
\burl{https://ieeexplore.ieee.org/document/9625453/}
Accessed 2024-06-04
\end{bchapter}
\endbibitem

\bibitem[\protect\citeauthoryear{He et~al.}{2020}]{he_achieving_2020}
\begin{barticle}
\bauthor{\bsnm{He}, \binits{H.}},
\bauthor{\bsnm{Tang}, \binits{X.}},
\bauthor{\bsnm{Wang}, \binits{D.}}:
\batitle{Achieving {Positive} {Secrecy} {Rate} in {Wireless} {Multiple} {Access} {Wiretap} {Channels}}.
\bjtitle{IEEE Transactions on Vehicular Technology}
\bvolume{69}(\bissue{8}),
\bfpage{9121}--\blpage{9125}
(\byear{2020})
\doiurl{10.1109/TVT.2020.2995860} .
Accessed 2024-06-04
\end{barticle}
\endbibitem

\bibitem[\protect\citeauthoryear{Magableh et~al.}{2022}]{magableh_performance_2022}
\begin{barticle}
\bauthor{\bsnm{Magableh}, \binits{A.M.}},
\bauthor{\bsnm{Aldalgamouni}, \binits{T.}},
\bauthor{\bsnm{Badarneh}, \binits{O.}},
\bauthor{\bsnm{Mumtaz}, \binits{S.}},
\bauthor{\bsnm{Muhaidat}, \binits{S.}}:
\batitle{Performance of {Non}-{Orthogonal} {Multiple} {Access} ({NOMA}) {Systems} {Over} \${N}\$-{Nakagami}-m {Multipath} {Fading} {Channels} for {5G} and {Beyond}}.
\bjtitle{IEEE Transactions on Vehicular Technology}
\bvolume{71}(\bissue{11}),
\bfpage{11615}--\blpage{11623}
(\byear{2022})
\doiurl{10.1109/TVT.2022.3189589} .
Accessed 2024-06-04
\end{barticle}
\endbibitem

\bibitem[\protect\citeauthoryear{Tse and Viswanath}{2005}]{tse_fundamentals_2005}
\begin{bbook}
\bauthor{\bsnm{Tse}, \binits{D.}},
\bauthor{\bsnm{Viswanath}, \binits{P.}}:
\bbtitle{Fundamentals of {Wireless} {Communication}},
\bedition{1}st edn.
\bpublisher{Cambridge University Press}, \blocation{???}
(\byear{2005}).
\doiurl{10.1017/CBO9780511807213} .
\burl{https://www.cambridge.org/core/product/identifier/9780511807213/type/book}
Accessed 2024-06-04
\end{bbook}
\endbibitem

\bibitem[\protect\citeauthoryear{Atapattu et~al.}{2011}]{atapattu_mixture_2011}
\begin{barticle}
\bauthor{\bsnm{Atapattu}, \binits{S.}},
\bauthor{\bsnm{Tellambura}, \binits{C.}},
\bauthor{\bsnm{Jiang}, \binits{H.}}:
\batitle{A {Mixture} {Gamma} {Distribution} to {Model} the {SNR} of {Wireless} {Channels}}.
\bjtitle{IEEE Transactions on Wireless Communications}
\bvolume{10}(\bissue{12}),
\bfpage{4193}--\blpage{4203}
(\byear{2011})
\doiurl{10.1109/TWC.2011.111210.102115} .
Accessed 2024-06-04
\end{barticle}
\endbibitem

\bibitem[\protect\citeauthoryear{S.~Xu}{}]{s_xu_performance_nodate}
\begin{botherref}
\oauthor{\bsnm{S.~Xu}, \binits{J.W.L.L.}}:
Performance analysis of correlated multiuser {MIMO} systems using {Copula}.
Wireless Communications and Networking Conference (WCNC)
\end{botherref}
\endbibitem

\bibitem[\protect\citeauthoryear{Jia et~al.}{2014}]{jia_reliability_2014}
\begin{barticle}
\bauthor{\bsnm{Jia}, \binits{X.}},
\bauthor{\bsnm{Wang}, \binits{L.}},
\bauthor{\bsnm{Wei}, \binits{C.}}:
\batitle{Reliability {Research} of {Dependent} {Failure} {Systems} {Using} {Copula}}.
\bjtitle{Communications in Statistics - Simulation and Computation}
\bvolume{43}(\bissue{8}),
\bfpage{1838}--\blpage{1851}
(\byear{2014})
\doiurl{10.1080/03610918.2013.800879} .
Accessed 2024-06-04
\end{barticle}
\endbibitem

\bibitem[\protect\citeauthoryear{Moeen~Taghavi and Abed~Hodtani}{2016}]{moeen_taghavi_impact_2016}
\begin{barticle}
\bauthor{\bsnm{Moeen~Taghavi}, \binits{E.}},
\bauthor{\bsnm{Abed~Hodtani}, \binits{G.}}:
\batitle{Impact of the relay on the coverage region of multiple‐access channels}.
\bjtitle{IET Communications}
\bvolume{10}(\bissue{5}),
\bfpage{508}--\blpage{524}
(\byear{2016})
\doiurl{10.1049/iet-com.2015.0685} .
Accessed 2024-06-04
\end{barticle}
\endbibitem

\bibitem[\protect\citeauthoryear{Ghadi and Hodtani}{2021}]{ghadi_copula-based_2021}
\begin{barticle}
\bauthor{\bsnm{Ghadi}, \binits{F.R.}},
\bauthor{\bsnm{Hodtani}, \binits{G.A.}}:
\batitle{Copula-{Based} {Analysis} of {Physical} {Layer} {Security} {Performances} {Over} {Correlated} {Rayleigh} {Fading} {Channels}}.
\bjtitle{IEEE Transactions on Information Forensics and Security}
\bvolume{16},
\bfpage{431}--\blpage{440}
(\byear{2021})
\doiurl{10.1109/TIFS.2020.3014553} .
Accessed 2024-06-04
\end{barticle}
\endbibitem

\bibitem[\protect\citeauthoryear{Ghadi et~al.}{2021}]{ghadi_role_2021}
\begin{barticle}
\bauthor{\bsnm{Ghadi}, \binits{F.R.}},
\bauthor{\bsnm{Hodtani}, \binits{G.A.}},
\bauthor{\bsnm{Lopez-Martinez}, \binits{F.J.}}:
\batitle{The {Role} of {Correlation} in the {Doubly} {Dirty} {Fading} {MAC} {With} {Side} {Information} at the {Transmitters}}.
\bjtitle{IEEE Wireless Communications Letters}
\bvolume{10}(\bissue{9}),
\bfpage{2070}--\blpage{2074}
(\byear{2021})
\doiurl{10.1109/LWC.2021.3092165} .
Accessed 2024-06-08
\end{barticle}
\endbibitem

\bibitem[\protect\citeauthoryear{Etminan et~al.}{2021}]{etminan_effects_2021}
\begin{barticle}
\bauthor{\bsnm{Etminan}, \binits{J.}},
\bauthor{\bsnm{Mohanna}, \binits{F.}},
\bauthor{\bsnm{Abed~Hodtani}, \binits{G.}}:
\batitle{Effects of relay and side information at one encoder on the coverage regions in {Gaussian} multiple‐access channel}.
\bjtitle{Transactions on Emerging Telecommunications Technologies}
\bvolume{32}(\bissue{10}),
\bfpage{4313}
(\byear{2021})
\doiurl{10.1002/ett.4313} .
Accessed 2024-06-04
\end{barticle}
\endbibitem

\bibitem[\protect\citeauthoryear{Men et~al.}{2017}]{men_performance_2017}
\begin{barticle}
\bauthor{\bsnm{Men}, \binits{J.}},
\bauthor{\bsnm{Ge}, \binits{J.}},
\bauthor{\bsnm{Zhang}, \binits{C.}}:
\batitle{Performance {Analysis} of {Nonorthogonal} {Multiple} {Access} for {Relaying} {Networks} {Over} {Nakagami}- \$m\$ {Fading} {Channels}}.
\bjtitle{IEEE Transactions on Vehicular Technology}
\bvolume{66}(\bissue{2}),
\bfpage{1200}--\blpage{1208}
(\byear{2017})
\doiurl{10.1109/TVT.2016.2555399} .
Accessed 2024-06-04
\end{barticle}
\endbibitem

\bibitem[\protect\citeauthoryear{Zhang et~al.}{2017}]{zhang_energy-efficient_2017}
\begin{barticle}
\bauthor{\bsnm{Zhang}, \binits{Y.}},
\bauthor{\bsnm{Wang}, \binits{H.-M.}},
\bauthor{\bsnm{Zheng}, \binits{T.-X.}},
\bauthor{\bsnm{Yang}, \binits{Q.}}:
\batitle{Energy-{Efficient} {Transmission} {Design} in {Non}-orthogonal {Multiple} {Access}}.
\bjtitle{IEEE Transactions on Vehicular Technology}
\bvolume{66}(\bissue{3}),
\bfpage{2852}--\blpage{2857}
(\byear{2017})
\doiurl{10.1109/TVT.2016.2578949} .
Accessed 2024-06-04
\end{barticle}
\endbibitem

\bibitem[\protect\citeauthoryear{Bahmani and Abed~Hodtani}{2013}]{bahmani_capacity_2013}
\begin{barticle}
\bauthor{\bsnm{Bahmani}, \binits{E.}},
\bauthor{\bsnm{Abed~Hodtani}, \binits{G.}}:
\batitle{Capacity for the doubly dirty multiple access channel with partial side information at the transmitters}.
\bjtitle{IET Communications}
\bvolume{7}(\bissue{11}),
\bfpage{1099}--\blpage{1108}
(\byear{2013})
\doiurl{10.1049/iet-com.2012.0523} .
Accessed 2024-06-09
\end{barticle}
\endbibitem

\bibitem[\protect\citeauthoryear{Sanyal and Samanta}{2021}]{sanyal_game_2021}
\begin{barticle}
\bauthor{\bsnm{Sanyal}, \binits{J.}},
\bauthor{\bsnm{Samanta}, \binits{T.}}:
\batitle{Game {Theoretic} {Approach} to {Enhancing} {D2D} {Communications} in {5G} {Wireless} {Networks}}.
\bjtitle{International Journal of Wireless Information Networks}
\bvolume{28}(\bissue{4}),
\bfpage{421}--\blpage{436}
(\byear{2021})
\doiurl{10.1007/s10776-021-00531-w} .
Accessed 2024-06-07
\end{barticle}
\endbibitem

\end{thebibliography}

\end{document}